\title{On the Complexity of Role Colouring Planar Graphs, Trees and Cographs}
\author{Christopher Purcell \& M. Puck Rombach\footnote{Department of Mathematics, University of California Los Angeles, 520 Portola Plaza, CA 90095, USA. {\tt rombach@math.ucla.edu}, +1 (310) 206-9974.}}
\date{February 2014}
\newtheorem{theorem}{Theorem}
\newtheorem{lemma}{Lemma}
\newtheorem{claim}{Claim}
\begin{document}

\maketitle

\section*{Abstract}
We prove several results about the complexity of the role colouring problem. A {\em role colouring} of a graph $G$ is an assignment of colours to the vertices of $G$ such that two vertices of the same colour have identical sets of colours in their neighbourhoods. We show that the problem of finding a role colouring with $1< k <n$ colours is NP-hard for planar graphs. We show that restricting the problem to trees yields a polynomially solvable case, as long as $k$ is either constant or has a constant difference with $n$, the number of vertices in the tree. Finally, we prove that cographs are always $k$-role-colourable for $1<k\leq n$ and construct such a colouring in polynomial time.

\section{Introduction}

A {\em role colouring} of a graph $G$ is an assignment of colours to the vertices of $G$ such that two vertices of the same colour have identical sets of colours in their neighbourhoods. The concept arises from the study of social networks. Network science is an increasingly important application of graph theory and role colourings are a natural formulation of roles played by nodes in a real-world network~\cite{rossi2014role,sailer1979structural}. This structure was formalised by White and Reitz in terms of graph homomorphisms in \cite{white1983graph}, and developed extensively by Borgatti and Everett~\cite{borgatti1993two,borgatti1992notions,everett1994regular,everett1991role}. A fast, applicable algorithm for finding role colourings is proposed in~\cite{hummell1987strukturbeschreibung,burt1990detecting}. A homomorphism $h$ is said to be {\em locally surjective} if $h$ is surjective when restricted to the neighbourhood set of any vertex. Locally surjective homomorphisms are equivalent to role colourings and they appear in the literature under many other names, \emph{e.g.} role assignment \cite{van2010computing}, role equivalence \cite{burt1990detecting}, regular equivalence \cite{borgatti1993two}. Throughout this paper we use the language of graph colourings and we refer to a role colouring using $k$ colours as a $k$-role-colouring. 

We consider the computational problem associated with role colourings whose input is a graph $G$ and whose output is a partition of the vertices of $G$ into $k$ non empty subsets satisfying the definition of a role colouring given above. We call this problem $k$-{\sc role-colourability}, or $k$-{\sc rolecol} for short. This problem differs from the more common {\sc colourability} problem in a few important ways. Firstly, every graph with no isolated vertex has a role colouring obtained by giving each vertex the same colour and every graph has a role colouring obtained by giving each vertex its own colour. Secondly a $k$-role-colouring does not usually imply the existence of a $k+1$-role-colouring. This makes makes role colouring inherently different from the original graph colouring problem.

Finding role colourings of a given size is known to be NP-complete in general \cite{roberts2001how, fiala2005complete}. For $k\geq 3$, the $k$-{\sc rolecol} problem is NP-complete when restricted to chordal graphs \cite{van2010computing}. However, 2-role-colouring can be solved in polynomial time for chordal graphs \cite{sheng2003two}. Not many other partial results on complexity of role colouring are known. In fact, interval graphs and trees are the only non trivial classes in which a polynomial solution is known to exist, and only for a constant number of colours.

The rest of this paper is organised as follows. In section~\ref{sec:planar}, we prove that $k$-{\sc rolecol} remains NP-complete even when restricted to planar graphs, a class that was suggested for examination in \cite{van2010computing} and is one of the most extensively studied in the literature. In Section~\ref{sec:trees}, we give an explicit algorithm that computes a $k$-role-colouring of a tree in polynomial time, as long as $k$ is either constant or has a constant difference with $n$, the number of vertices in $T$. Finally, in Section~\ref{sec:cographs}, we show that every cograph (with at least $k$ vertices) has a $k$-role-colouring, and hence that the decision version of the problem is solvable in polynomial time in this class. Our proof is constructive and gives an explicit algorithm to construct such a colouring.


\section{Planar Graphs}
\label{sec:planar}

In order to prove that $k$-{\sc rolecol} is NP-complete when restricted to planar graphs, we introduce the {\sc satisfiability problem}, defined below.
A boolean formula $\phi$ (in {\em conjunctive normal form}) is a set of clauses $C_1,C_2,\ldots$, each of which is a set of variables $x_1,x_2,\ldots$. The variables may take values TRUE or FALSE. For a given assignment of these values to the variables, a clause is said to be {\em satisfied} if at least one of its variables is assigned the value TRUE. A formula is satisfied if each of its clauses is satisfied. The {\sc satisfiability} problem takes a boolean formula on $n$ variables as its input and asks if there is an assignment of TRUE and FALSE to the variables that satisifies the formula. The general {\sc satisfiability} problem was the first to be revealed to be NP-complete \cite{cook1971complexity}, and remains a central problem in theoretical computer science. 

We will use a reduction from a certain restricted version of {\sc satisfiability}. In order to describe this restricted problem, we define the following graph theoretic notion. The {\em formula graph} $G_\phi$ of a given formula $\phi$ is a bipartite graph whose vertices correspond to the clauses and variables of $\phi$ with an edge between $C$ and $x$ if the variable $x$ appears in the clause $C$. Let {\sc $k$-satisfiability} be the {\sc satisfiability} problem with the restriction that each clause contains at most $k$ variables. The {\sc $3$-satisfiability} problem is NP-complete even when restricted to formulas with planar formula graphs~\cite{lichtenstein1982planar}. In \cite{tovey1984simplified}, Tovey showed that the problem is NP-complete under the restriction that each clause has two or three variables and each variable appears at most three times
We call the corresponding problem {\sc $3*,3*$-satisfiability}. We now combine the restrictions imposed by Tovey and planarity to show that {\sc planar $3*,3*$-satisfiability}, which is {\sc $3*,3*$-satisfiability} with planar formula graphs, is also NP-complete. We list a couple of planarity preserving operations that we will need throughout the coming proofs, in an easy lemma. See also~\cite{gross2005graph}.

\begin{lemma}\label{planarop}
If $G'$ is a graph created from a planar graph $G$ by any of the following operations, then $G'$ is planar.
\begin{description}
\item[(a)] Adding a path $x,z_1,\ldots,z_k,y$ where $x,y \in V(G)$, and $x,y$ share a face in some planar drawing of $G$, and $z_1,\ldots,z_k$ are new vertices.
\item[(b)] Replacing a vertex $x \in V(G)$ with $d_G(x)=k$ by a cycle $z_1,\ldots,z_k$ with edges $z_i,z_{i+1}$, $1 \leq i \leq k-1$ and $z_k,z_1$, and edges $z_i,y_i$, $1 \leq i \leq k$, where $y_1,\ldots,y_k$ are the neigbours in $G$ of $x$ appearing in clockwise order in a planar drawing of $G$.
\item[(c)] Attaching a new planar subgraph $H$ to $G$, such that $V(G')=V(G) \cup V(H)$, $E(G')=E(G) \cup E(H) \cup xz$, where $x \in V(G)$, $z \in V(H)$. 
\end{description}

\end{lemma}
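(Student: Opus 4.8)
The plan is to prove each of the three statements by taking a planar drawing of $G$ and exhibiting an explicit local modification that yields a planar drawing of $G'$; throughout I would work with a fixed planar drawing $D$ of $G$ together with its induced rotation system and face structure, and use the basic fact that a graph is planar iff it admits such a drawing in the sphere (or plane).

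For (a), I would start from a planar drawing $D$ of $G$ in which $x$ and $y$ lie on the boundary of a common face $F$; such a drawing exists by hypothesis. Since $F$ is an open region of the plane whose closure contains both $x$ and $y$, there is a simple arc from $x$ to $y$ whose interior lies entirely in $F$, hence meets no vertex or edge of $D$. Placing the new vertices $z_1,\ldots,z_k$ in order along this arc and drawing the edges of the path along it produces a planar drawing of $G'$. Operation (c) is similar in spirit: pick any face $F$ of $D$ incident with $x$; redraw the planar graph $H$ inside an arbitrarily small disk so that $z$ lies on the outer boundary of that disk (every planar graph has, for each of its vertices, an embedding with that vertex on the outer face), and place this small disk inside $F$. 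Then route the single edge $xz$ as an arc from $x$ through $F$ to $z$, avoiding everything else. Since the disk and the arc stay inside the open face $F$, the union is a planar drawing of $G'$.

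Operation (b) is the one that genuinely uses the rotation system, and is the main (if mild) obstacle. Take a small circle $\gamma$ around $x$ in $D$; by the meaning of ``clockwise order of the neighbours of $x$'', the edges $xy_1,\ldots,xy_k$ cross $\gamma$ in exactly this cyclic order. Delete $x$ and the portions of the edges $xy_i$ inside $\gamma$, leaving $k$ stubs ending on $\gamma$; the rest of $D$ is untouched and still planar. Inside the disk bounded by $\gamma$, draw the cycle $z_1,\ldots,z_k$ as a slightly smaller concentric circle, positioning $z_i$ on the radius pointing to the crossing point of $xy_i$ with $\gamma$, and draw the cycle edges $z_iz_{i+1}$ and $z_kz_1$ along this inner circle. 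Finally connect each $z_i$ to its stub by a short arc in the annulus between the two circles. Because the $z_i$ occur in the same cyclic order as the crossing points of the edges $xy_i$, these connecting arcs can be drawn pairwise disjointly, so no crossings are introduced; this gives a planar drawing of $G'$. The only subtlety worth emphasising is exactly this ordering argument: if the $y_i$ were not taken in the cyclic order prescribed by the embedding, some pair of the new edges $z_iy_i$ would be forced to cross. The remaining verifications — that arcs can be routed without crossings through empty faces in (a) and (c) — are routine.
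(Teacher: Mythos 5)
Your proposal is correct and follows essentially the same route as the paper: all three parts are proved by an explicit local modification of a planar drawing (routing the new path through the shared face in (a), replacing a small disk around $x$ by a concentric cycle respecting the rotation at $x$ in (b), and placing $H$ so that the single edge $xz$ can be drawn without crossings in (c)); your version just spells out the details the paper leaves implicit, and in (c) you embed $H$ inside a face incident with $x$ rather than, as the paper does, drawing $G$ and $H$ with $x$ and $z$ on their outer faces — an immaterial difference.
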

\begin{proof}
\begin{description}
\item[(a)] Replacing an edge by a multi-edge does not destroy planarity and replacing an edge $xy$ by a path $x,z_1,\ldots,z_k,y$ clearly does not destroy planarity either.
\item[(b)] Cycles are planar and since the neighbours $y_1,\ldots,y_k$ and new vertices $z_1,\ldots,z_k$ are in the same order clockwise, the edges $z_i,y_i$, $1 \leq i \leq k$ do not cross each other or any new cycle-edges.
\item[(c)] We take the disjoint union of $G$ and $H$ and draw $G$ such that $x$ is on the outer face and $H$ such that $z$ is on the outer face. Adding an edge between two vertices on the same face does not destroy planarity.
\end{description}
\end{proof}

\begin{lemma}
The {\sc planar $3,3$-satisfiability} problem is NP-complete.
\end{lemma}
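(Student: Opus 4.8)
The plan is to reduce from {\sc planar $3$-satisfiability}, which is NP-complete by Lichtenstein~\cite{lichtenstein1982planar}. After the standard padding with dummy variables we may assume the input formula $\phi$ has every clause of size exactly three and has a planar formula graph $G_\phi$; I want to produce in polynomial time an equisatisfiable formula $\phi'$ that still has a planar formula graph, in which every clause has two or three variables and every variable occurs at most three times. I deliberately start from the planar version rather than from Tovey's {\sc $3,3$-satisfiability}, since the latter would force us to planarise an arbitrary formula graph (needing crossover gadgets), whereas bounding the number of occurrences while preserving a given planar embedding is clean.

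The only property of $\phi$ that may fail is the occurrence bound. Fix a variable $x$ occurring $m\ge 4$ times and let $C_1,\ldots,C_m$ be the clauses containing it, listed in the clockwise order in which they appear around the vertex $x$ in a fixed planar drawing of $G_\phi$. Replace $x$ by fresh variables $x_1,\ldots,x_m$: in $C_i$ substitute the literal $x_i$ (respectively $\neg x_i$) for the occurrence of $x$ (respectively $\neg x$), and add the consistency clauses $D_i=(\neg x_i\vee x_{i+1})$ for $1\le i\le m$, indices modulo $m$. The conjunction $D_1\wedge\cdots\wedge D_m$ is logically equivalent to $x_1\Leftrightarrow x_2\Leftrightarrow\cdots\Leftrightarrow x_m$, so replacing $x$ this way preserves satisfiability. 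Each new variable $x_i$ now occurs exactly three times (once in $C_i$, once in $D_{i-1}$, once in $D_i$), and each new clause $D_i$ has size two, so after performing this for every variable of $\phi$ with more than three occurrences we obtain a formula $\phi'$ with clauses of size two or three, every variable occurring at most three times, and $\phi'$ satisfiable if and only if $\phi$ is. The whole construction is clearly polynomial, and membership of {\sc planar $3,3$-satisfiability} in NP is immediate since a truth assignment is a polynomial certificate.

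It remains to check that $G_{\phi'}$ is planar, and this is the one point that requires care. In $G_{\phi'}$ the neighbourhood of the former vertex $x$ is replaced by the even cycle $x_1,D_1,x_2,D_2,\ldots,x_m,D_m$ together with the pendant edges $x_iC_i$. Because the $C_i$ were encountered in clockwise order around $x$, we may draw this cycle inside a small disc around the former location of $x$ and route each edge $x_iC_i$ radially outward without crossings; formally, this is an application of Lemma~\ref{planarop}(b) (replacing $x$ by a cycle through its neighbours, in their cyclic order) followed by Lemma~\ref{planarop}(a) (subdividing each cycle edge to insert the new vertices $D_i$). Hence $G_{\phi'}$ is planar, and {\sc planar $3,3$-satisfiability} is NP-complete.

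The main obstacle is precisely the planarity verification just described: one must keep the cyclic gadget consistent with the clockwise order of the clauses around $x$ so that no crossings are introduced, and it is convenient that Lemma~\ref{planarop} was set up for exactly this. Everything else — the logical equivalence of the cycle of implications, the bookkeeping of clause sizes and occurrence counts, the reduction to clauses of size exactly three at the start (or, if one prefers, starting from the size-two-or-three variant of planar $3$-SAT directly), and the trivial NP membership — is routine; variables occurring once or twice need no modification at all.
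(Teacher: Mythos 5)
Your proposal is correct and follows essentially the same route as the paper: start from Lichtenstein's planar $3$-SAT, apply Tovey's variable-splitting with a cycle of binary consistency clauses arranged according to the clockwise order of the clauses around the split variable, and justify planarity via Lemma~\ref{planarop}(b) followed by (a). The only cosmetic difference is the orientation of the implication cycle (the paper uses clauses $\{x_i \lor \bar{x}_{i+1}\}$, you use $(\neg x_i \vee x_{i+1})$), which is immaterial.
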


\begin{proof}
We follow the method~\cite{tovey1984simplified} of reducing any {\sc $3$-satisfiability} problem to a {\sc $3*,3*$-satisfiability} problem. The method is as follows. Suppose that variable $x$ appears in $k$ clauses. Create $k$ new variables $x_1,\ldots,x_k$ and replace the $i$th occurence of $x$ with $x_i$, for $1 \leq i \leq k$. Append the clause $\{ x_i \lor \bar{x}_{i+1} \}$ for $1 \leq i \leq k-1$ and $\{ x_k \lor \bar{x}_1 \}$. This new set of clauses forces the variables $x_i$, $1 \leq i \leq k$, to be either all true or all false. 

Let $\phi$ be a {\sc planar $3$-satisfiability} problem, with a given planar drawing of $G_\phi$. Suppose that variable $x$ appears in $k$ clauses. Create $k$ new variables $x_1,\ldots,x_k$ and replace vertex $x$ in $G_\phi$ by the cycle on new vertices $x_1,\ldots,x_k$. Replace the $j_i$th occurence of $x$ with $x_i$, for $1 \leq i \leq k$. Such that the clauses $C_{j_1},\ldots,C_{j_k}$ appear in clockwise order in the neighbourhood of $x$ in our given planar drawing of $G_\phi$ (planarity-preserving operation (b) in lemma \ref{planarop}). Finally, replace each edge $x_ix_{i+1}$ by a path $x_i, C_{x_i}, x_{i+1}$, for $1 \leq i \leq k-1$, and edge $x_kx_{1}$ by path $x_k, C_{x_k}, x_{1}$ (planarity-preserving operation (a) in lemma \ref{planarop}). The clause vertices $C_{x_i}$ represent the new clauses $\{ x_i \lor \bar{x}_{i+1} \}$.  

\end{proof}

\begin{theorem}\label{th:rc}
$k$-{\sc rolecol}, $k \geq 2$, is NP-hard for connected planar graphs.
\end{theorem}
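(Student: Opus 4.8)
The plan is to reduce from {\sc planar $3,3$-satisfiability}, which the previous lemma establishes to be NP-complete. Given a formula $\phi$ together with a planar drawing of its formula graph $G_\phi$, I would construct in polynomial time a connected planar graph $G_{\phi,k}$ that admits a $k$-role-colouring precisely when $\phi$ is satisfiable; since a candidate colouring can be verified in polynomial time, this also places the problem in NP and hence proves NP-completeness.

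I would first treat the base case $k=2$. The graph $G_{\phi,2}$ is assembled from three kinds of gadgets: an \emph{anchor} gadget, a small fixed subgraph whose $2$-role-colourings are completely understood and which fixes the ``meaning'' of the two colours (so that, globally, one colour class plays the role of ``true'' vertices and the other of ``false'' vertices); a \emph{variable gadget} $X_x$ for each variable $x$, designed so that its valid $2$-role-colourings are exactly two, corresponding to $x$ being true and $x$ being false, and exposing a literal vertex $\ell_x$ whose colour records the choice; and a \emph{clause gadget} $Y_C$ for each clause $C$, joined to the literal vertices of the (two or three) variables occurring in $C$, and built so that it can be role-coloured consistently if and only if at least one of those literal vertices carries the ``true'' colour. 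Planarity would be maintained via Lemma~\ref{planarop}: operation (b) splits a variable vertex of $G_\phi$ (which has degree at most three) into a cycle carrying one port per occurrence, so the incident clause gadgets can be attached in the cyclic order given by the drawing; operation (a) subdivides the connecting edges so the gadgetry can be inserted along them; and operation (c) hangs the bounded-size variable and clause gadgets off these ports. Connectivity is immediate from the attachments.

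For $k\ge 3$ I would extend $G_{\phi,2}$ by attaching, at a single designated vertex of the anchor gadget, a fixed planar ``palette'' subgraph (for instance a suitable broom or chain of small cliques) whose role colourings are forced to introduce exactly $k-2$ further colours, none of which can migrate into the core and none of which impose a new neighbourhood constraint at the attachment point. One then checks that any $k$-role-colouring of $G_{\phi,k}$ restricts to a valid $2$-role-colouring of the core, and conversely that a satisfying assignment yields a $2$-role-colouring of the core which combines with the canonical colouring of the palette into a $k$-role-colouring of the whole graph.

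The main obstacle is the design and verification of the gadgets, and in particular of the clause gadget: role-colouring constraints are global and non-monotone, so one cannot argue locally that a partial colouring extends. Instead I would enumerate the possible role graphs on $k$ vertices — for $k=2$ there are only a few once isolated vertices are excluded by connectedness — and show that every spurious global colouring is ruled out by one of the gadgets, while the two intended states of each variable gadget propagate consistently through the clause gadgets exactly when $\phi$ is satisfied. A secondary difficulty is making the $k\ge 3$ padding work uniformly in $k$ while simultaneously preserving planarity and connectivity; Lemma~\ref{planarop} handles the bookkeeping, but one must ensure the palette attaches to a vertex whose colour-neighbourhood is already ``saturated'', so that the core colouring is genuinely undisturbed by the extra colours.
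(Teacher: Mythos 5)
Your overall frame (reduction from {\sc planar $3,3$-satisfiability}, planarity maintained via Lemma~\ref{planarop}) matches the paper, but as it stands the proposal has two genuine gaps. First, for $k=2$ the entire content of the argument is the construction and verification of the gadgets, and you leave all of it unspecified: you postulate an ``anchor'' gadget whose $2$-role-colourings are completely understood, a variable gadget with exactly two valid colourings, and a clause gadget colourable iff a true literal is present, but you do not exhibit any of these or prove they have the claimed properties. The paper's gadgets are concrete and quite different in spirit from your description: there is no separate anchor; a pendant path $a_j,b_j$ is hung on \emph{every} clause vertex, and the degree-one vertex $a_j$ forces the role graph (red vertices see only blue, blue vertices see both), after which the triangle $x_i,\bar{x_i},y_i$ forces exactly one literal vertex per variable to be red and each clause vertex to have a red literal neighbour. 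Until you produce gadgets and carry out this kind of case analysis, you have a plan, not a proof.

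Second, and more seriously, your strategy for $k\ge 3$ --- attach a fixed ``palette'' subgraph at one vertex of the $k=2$ core that soaks up the extra $k-2$ colours without disturbing the core --- is a step that would fail, or at least needs an idea you do not supply. In a role colouring, if the attachment vertex has colour $c$ and acquires a palette neighbour of a new colour $c'$, then \emph{every} vertex of colour $c$ in the graph must have a neighbour of colour $c'$; so the new colours necessarily propagate constraints into the core unless the attachment vertex's colour is unique, and making it unique already destroys the claim that the core carries a genuine $2$-role-colouring (its core neighbours now see a colour no other vertex of their class sees, and so on). Nothing in your sketch rules out $k$-role-colourings that spread the $k$ colours through the core in ways corresponding to no satisfying assignment, and ``any $k$-role-colouring restricts to a valid $2$-role-colouring of the core'' is exactly the statement you would have to prove and cannot get from saturating one attachment point. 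The paper avoids this trap by not padding at all: for $k\ge 3$ it rebuilds the gadgets so that they scale with $k$, using the lemma that a dangling path on $k$ vertices must receive $k$ distinct colours. Dangling paths $v_{j,1},\ldots,v_{j,k}$ at the clause vertices together with the triangles on $u_{j,1},u_{j,2}$ force the role graph to be a path on colours $1,\ldots,k$ with a loop at $k$, whence clause vertices get colour $k-1$, the paths $y_{i,1},\ldots,y_{i,k-1}$ and the $z$-paths of length $2k-4$ force $\{x_i,\bar{x_i}\}$ to receive $\{k-2,k\}$, and a clause is satisfiable iff its vertex has a neighbour of colour $k-2$. If you want to salvage your two-stage architecture, you would need to prove a reusable ``colour-confinement'' lemma for the palette attachment, which is precisely the missing (and doubtful) ingredient; otherwise the uniform-in-$k$ gadget redesign is the way the argument actually goes through.
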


\begin{proof}

Let $\phi$ be a planar boolean formula on $n$ variables $x_1,x_2,\ldots,x_n$ having $m$ clauses $C_1,C_2,\ldots,C_m$, such that each variable appears at most three times and each clause is of size 2 or 3. Let $G_\phi$ be its formula graph. We will construct a related planar graph $G'_\phi$ that has a $k$-role-colouring if and only if $\phi$ is satisfiable.
We split the proof into two cases.

First suppose $k=2$. We construct $G'_\phi$ from $G_\phi$ as follows. To each clause vertex $C_j$ we add a path $a_j,b_j$ with edge $b_jC_j$ (lemma \ref{planarop} (c)). Since a variable appears in at most three clauses, at most one of $x_i$ and $\bar{x_i}$ appears twice. Furthermore, we can assume that a variable appears both positively and negatively. Subsequently, one of $x_i$ and $\bar{x_i}$ appears exactly once. If $\bar{x_i}$ appears exactly once, in clause $C_j$, we replace the edge $x_i C_j$ by a path $x_i,\bar{x_i},C_j$. Otherwise, we have that $x_i$ appears exactly once, in which case we relabel the node $x_i$ to $\bar{x_i}$, and we replace the edge $\bar{x_i} C_j$ by a path $\bar{x_i},x_i,C_j$ (lemma \ref{planarop} (a)). Finally, we add a vertex $y_i$ to each pair $x_i$ and $\bar{x_i}$ to form a triangle (lemma \ref{planarop} (a)). See figure \ref{fig:2col}.

For each variable $x_i$, the graph $G'_\phi$ will contain a copy of a triangle. With a slight abuse of notation for the sake of readability, we label these vertices $x_i$,$\bar{x_i}$,$y_i$. For each clause $C_j$, $G'_\phi$ contains a path on three vertices labelled $a_j,b_j,C_j$. Again this slight abuse of notation will aid the reader, and we refer to the vertices $x_1,\bar{x_1},\ldots,x_n,\bar{x_n}$ as {\em literal vertices} and the vertices $C_1,\ldots,C_m$ as {\em clause vertices}. We add an edge between a literal vertex $x$ and a clause vertex $C$ if the literal $x$ appears in the clause $C$.

Suppose that $G'_\phi$ has a 2-role-colouring. Without loss of generality, the vertex $a_1$ is red. It is easy to see that $b_1$ cannot also be red, for otherwise red vertices could only have red neighbours, and every vertex would then be red which would be a contradiction. Suppose $C_1$ is red. Then, since $b_1$ is blue and has only red neighbours, the neighbourhood of every blue vertex must be red. Let $x$ be a neighbour of $C_1$ other than $b_1$. Since $x$ is a neighbour of a red vertex it must be coloured blue. But $x$ is contained in a triangle, the other two vertices of which must be coloured red. This contradiction shows that $C_1$ must be coloured blue, and we can deduce that each of the paths $a_j,b_j,C_j$ must be coloured red,blue,blue from the fact that the vertex $a_j$ has degree 1 and since a blue vertex must have at least one neighbour of each colour, $a_j$ must be coloured red.

\FloatBarrier
\begin{figure}
\begin{center}
\begin{tikzpicture}[thick,scale=0.7, every node/.style={scale=0.8}]
\draw[black!50,line width=.8pt] (0,0) -- (1,0);
\draw[black!50,line width=.8pt] (0,0) -- (.5,-.8);
\draw[black!50,line width=.8pt] (1,0) -- (.5,-.8);
\draw[black!50,line width=.8pt] (6,0) -- (7,0);
\draw[black!50,line width=.8pt] (6,0) -- (6.5,-.8);
\draw[black!50,line width=.8pt] (7,0) -- (6.5,-.8);
\draw[black!50,line width=.8pt] (.5,4) -- (0,0);
\draw[black!50,line width=.8pt] (.5,4) -- (.5,6);
\draw[black!50,line width=.8pt] (.5,4) -- (6,0);
\draw[black!50,line width=.8pt] (6.5,4) -- (6.5,6);
\draw[black!50,line width=.8pt] (6.5,4) -- (7,0);
\filldraw[fill=white, draw=black,line width=1pt] (0,0) circle (.25);
\draw[fill=black!100,black!100] (1,0) circle (.25);
\draw[fill=black!100,black!100] (.5,-.8) circle (.25);
\filldraw[fill=white, draw=black,line width=1pt] (7,0) circle (.25);
\draw[fill=black!100,black!100] (6,0) circle (.25);
\draw[fill=black!100,black!100] (6.5,-.8) circle (.25);
\draw[fill=black!100,black!100] (.5,4) circle (.25);
\draw[fill=black!100,black!100] (.5,5) circle (.25);
\filldraw[fill=white, draw=black,line width=1pt] (.5,6) circle (.25);
\draw[fill=black!100,black!100] (6.5,4) circle (.25);
\draw[fill=black!100,black!100] (6.5,5) circle (.25);
\filldraw[fill=white, draw=black,line width=1pt] (6.5,6) circle (.25);
\draw (-.6,0) node{$x_1$};
\draw (1.6,0) node{$\bar{x_1}$};
\draw (5.4,0) node{$x_2$};
\draw (7.6,0) node{$\bar{x_2}$};
\draw (.5,-1.3) node{$y_1$};
\draw (6.5,-1.3) node{$y_2$};
\draw (-.1,4) node{$C_1$};
\draw (5.9,4) node{$C_2$};
\draw (0,5) node{$b_1$};
\draw (6,5) node{$b_2$};
\draw (0,6) node{$a_1$};
\draw (6,6) node{$a_2$};
\end{tikzpicture}
\caption{A graph $G_\phi$ representing the boolean formula $\phi = \{x_1 \cup x_2 \} \cap \{ \bar{x_2} \}$ with a 2-role colouring corresponding to a satisfying assignment where $x_1$ and $\bar{x_2}$ are true.}\label{fig:2col}
\end{center}
\end{figure}
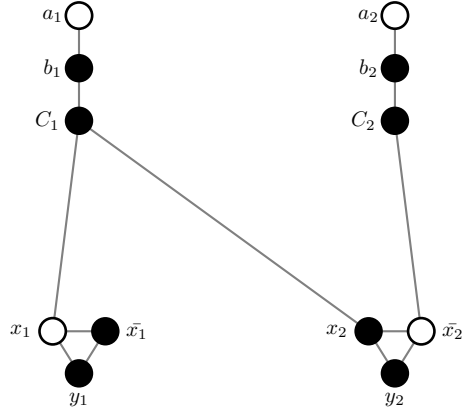
\begin{figure}
\begin{center}
\begin{tikzpicture}[thick,scale=0.7, every node/.style={scale=0.8}]
\node (x1) at (0,0){};
\draw[black!50,line width=.8pt] (0,0) -- (.45,0);
\draw[black!50,line width=.8pt,dashed] (.95,0) -- (3.05,0);
\draw[black!50,line width=.8pt] (3.55,0) -- (3.75,0);
\draw[black!50,line width=.8pt] (4.25,0) -- (4.7,0);
\draw[black!50,line width=.8pt] (0,0) -- (1.7,-1.5);
\draw[black!50,line width=.8pt] (4.7,0) -- (1.7,-1.5);
\draw[black!50,line width=.8pt] (1.7,-1.5) -- (1.7,-2.9);
\draw[black!50,line width=.8pt,dashed] (1.7,-4.2) -- (1.7,-2.9);
\draw[fill=black!100,black!100] (4.7,0) circle (.25);
\draw[pattern=horizontal lines,line width=1pt] (.7,0) circle (.25);
\filldraw[fill=black!20, draw=black,line width=1pt] (2,0) circle (.25);
\draw[pattern=vertical lines,line width=1pt] (3.3,0) circle (.25);
\draw[fill=black!100,black!100] (4,0) circle (.25);
\filldraw[fill=white, draw=black,line width=1pt] (1.7,-1.5) circle (.25);
\draw[pattern=vertical lines,line width=1pt] (1.7,-1.5) circle (.25);
\filldraw[fill=white, draw=black,line width=1pt] (1.7,-2.2) circle (.25);
\filldraw[fill=white, draw=black,line width=1pt] (1.7,-2.9) circle (.25);
\draw[pattern=horizontal lines,line width=1pt] (1.7,-2.9) circle (.25);
\filldraw[fill=black!20, draw=black,line width=1pt] (1.7,-4.2) circle (.25);
\draw (-.6,0) node{$x_1$};
\draw (5.3,0) node{$\bar{x_1}$};
\draw (.7,.5) node{$z_{1,1}$};
\draw (1.8,.5) node{$z_{1,k-3}$};
\draw (3.2,.5) node{$z_{1,2k-5}$};
\draw (4,.8) node{$z_{1,2k-4}$};
\draw (2.6,-1.5) node{$y_{1,k-1}$};
\draw (2.6,-2.2) node{$y_{1,k-2}$};
\draw (2.6,-2.9) node{$y_{1,k-3}$};
\draw (2.4,-4.2) node{$y_{1,1}$};
\pgftransformxshift{300}
\node (x2bar) at (1.3,0){};
\node (x2) at (-3.4,0){};
\draw[black!50,line width=.8pt] (1.3,0) -- (.85,0);
\draw[black!50,line width=.8pt,dashed] (.35,0) -- (-1.75,0);
\draw[black!50,line width=.8pt] (-2.25,0) -- (-2.45,0);
\draw[black!50,line width=.8pt] (-2.95,0) -- (-3.4,0);
\draw[black!50,line width=.8pt] (1.3,0) -- (-1.7,-1.5);
\draw[black!50,line width=.8pt] (-3.4,0) -- (-1.7,-1.5);
\draw[black!50,line width=.8pt] (-1.7,-1.5) -- (-1.7,-2.9);
\draw[black!50,line width=.8pt,dashed] (-1.7,-4.2) -- (-1.7,-2.9);
\draw[fill=black!100,black!100] (-3.4,0) circle (.25);
\filldraw[pattern=horizontal lines,line width=1pt] (.6,0) circle (.25);
\filldraw[fill=black!20, draw=black,line width=1pt] (-.7,0) circle (.25);
\draw[pattern=vertical lines,line width=1pt] (-2,0) circle (.25);
\draw[fill=black!100,black!100] (-2.7,0) circle (.25);
\filldraw[fill=white, draw=black,line width=1pt] (-1.7,-1.5) circle (.25);
\draw[pattern=vertical lines,line width=1pt] (-1.7,-1.5) circle (.25);
\filldraw[fill=white, draw=black,line width=1pt] (-1.7,-2.2) circle (.25);
\filldraw[fill=white, draw=black,line width=1pt] (-1.7,-2.9) circle (.25);
\draw[pattern=horizontal lines,line width=1pt] (-1.7,-2.9) circle (.25);
\filldraw[fill=black!20, draw=black,line width=1pt] (-1.7,-4.2) circle (.25);
\draw (1.9,0) node{$\bar{x_2}$};
\draw (-4,0) node{$x_2$};
\draw (-2.7,.5) node{$z_{2,1}$};
\draw (-2,.5) node{$z_{2,2}$};
\draw (-.8,.5) node{$z_{2,k}$};
\draw (.47,.5) node{$z_{2,2k-4}$};
\draw (-2.6,-1.5) node{$y_{2,k-1}$};
\draw (-2.6,-2.2) node{$y_{2,k-2}$};
\draw (-2.6,-2.9) node{$y_{2,k-3}$};
\draw (-2.4,-4.2) node{$y_{2,1}$};
\pgftransformxshift{-270}
\pgftransformyshift{100}
\draw[black!50,line width=.8pt,dashed] (1.7,3.6) -- (1.7,4.6);
\draw[black!50,line width=.8pt] (1.7,1.5) -- (1.7,3.6);
\draw[black!50,line width=.8pt] (1.7,1.5) -- (1.7,0);
\draw[black!50,line width=.8pt] (1.7,1.5) -- (1,0);
\draw[black!50,line width=.8pt] (1.7,1.5) -- (2.4,0);
\draw[black!50,line width=.8pt] (2.4,0) -- (1,0);
\draw[black!50,line width=.8pt] (node cs:name=x1) -- (1.7,0);
\filldraw[fill=white, draw=black,line width=1pt] (node cs:name=x1) circle (.25);
\draw[black!50,line width=.8pt] (node cs:name=x2) -- (1.7,0);
\filldraw[fill=white, draw=black,line width=1pt] (1.7,0) circle (.25);
\draw[pattern=vertical lines,line width=1pt] (1.7,0) circle (.25);
\draw[fill=black!100,black!100] (1,0) circle (.25);
\draw[fill=black!100,black!100] (2.4,0) circle (.25);
\draw[fill=black!100,black!100] (1.7,1.5) circle (.25);
\filldraw[fill=white, draw=black,line width=1pt] (1.7,2.2) circle (.25);
\draw[pattern=vertical lines,line width=1pt] (1.7,2.2) circle (.25);
\filldraw[fill=white, draw=black,line width=1pt] (1.7,2.9) circle (.25);
\filldraw[fill=white, draw=black,line width=1pt] (1.7,2.9) circle (.25);
\filldraw[fill=white, draw=black,line width=1pt] (1.7,3.6) circle (.25);
\draw[pattern=horizontal lines,line width=1pt] (1.7,3.6) circle (.25);
\filldraw[fill=black!20, draw=black,line width=1pt] (1.7,4.6) circle (.25);
\draw (1.9,-.6) node{$C_1$};
\draw (.3,0) node{$u_{1,1}$};
\draw (3.1,0) node{$u_{1,2}$};
\draw (2.4,1.5) node{$v_{1,k}$};
\draw (2.6,2.2) node{$v_{1,k-1}$};
\draw (2.6,2.9) node{$v_{1,k-2}$};
\draw (2.6,3.6) node{$v_{1,k-3}$};
\draw (2.4,4.6) node{$v_{1,1}$};
\pgftransformxshift{280}
\draw[black!50,line width=.8pt,dashed] (-1.7,3.6) -- (-1.7,4.6);
\draw[black!50,line width=.8pt] (-1.7,1.5) -- (-1.7,3.6);
\draw[black!50,line width=.8pt] (-1.7,1.5) -- (-1.7,0);
\draw[black!50,line width=.8pt] (-1.7,1.5) -- (-1,0);
\draw[black!50,line width=.8pt] (-1.7,1.5) -- (-2.4,0);
\draw[black!50,line width=.8pt] (-2.4,0) -- (-1,0);
\draw[black!50,line width=.8pt] (node cs:name=x2bar) -- (-1.7,0);
\filldraw[fill=white, draw=black,line width=1pt] (node cs:name=x2bar) circle (.25);
\filldraw[fill=white, draw=black,line width=1pt] (-1.7,0) circle (.25);
\draw[pattern=vertical lines,line width=1pt] (-1.7,0) circle (.25);
\draw[fill=black!100,black!100] (-1,0) circle (.25);
\draw[fill=black!100,black!100] (-2.4,0) circle (.25);
\draw[fill=black!100,black!100] (-1.7,1.5) circle (.25);
\filldraw[fill=white, draw=black,line width=1pt] (-1.7,2.2) circle (.25);
\draw[pattern=vertical lines,line width=1pt] (-1.7,2.2) circle (.25);
\filldraw[fill=white, draw=black,line width=1pt] (-1.7,2.9) circle (.25);
\filldraw[fill=white, draw=black,line width=1pt] (-1.7,2.9) circle (.25);
\filldraw[fill=white, draw=black,line width=1pt] (-1.7,3.6) circle (.25);
\draw[pattern=horizontal lines,line width=1pt] (-1.7,3.6) circle (.25);
\filldraw[fill=black!20, draw=black,line width=1pt] (-1.7,4.6) circle (.25);
\draw (-1.9,-.6) node{$C_2$};
\draw (-.3,0) node{$u_{2,2}$};
\draw (-3.1,0) node{$u_{2,1}$};
\draw (-2.4,1.5) node{$v_{2,k}$};
\draw (-2.6,2.2) node{$v_{2,k-1}$};
\draw (-2.6,2.9) node{$v_{2,k-2}$};
\draw (-2.6,3.6) node{$v_{2,k-3}$};
\draw (-2.4,4.6) node{$v_{2,1}$};
\end{tikzpicture}
\caption{A graph $G_\phi$ representing the boolean formula $\phi = \{x_1 \cup x_2 \} \cap \{ \bar{x_2} \}$ with a $k$-role colouring corresponding to a satisfying assignment where $x_1$ and $\bar{x_2}$ are true.}\label{fig:kcol}
\end{center}
\end{figure}
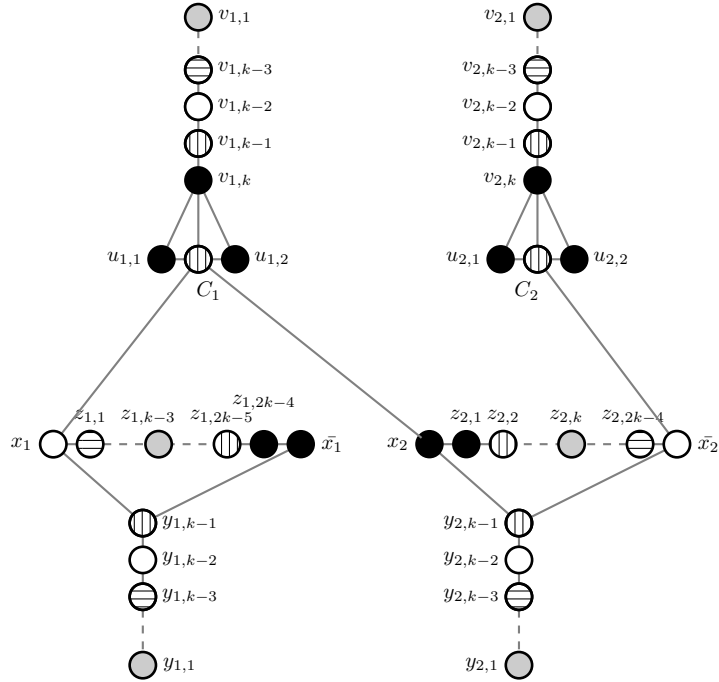
\FloatBarrier

Now consider the triangle $x_1,\bar{x_1},y_1$. They cannot all be coloured blue because blue vertices must have red neighbours. No two of them can be red, because red vertices cannot have red neighbours. Therefore exactly one of $x_1,\bar{x_1},y_1$ is coloured red. Observe that each clause vertex must have a red neighbour amongst the literal vertices in its neighbourhood. We may therefore construct a truth assignment by assigning TRUE to the variable $x_i$ if and only if the literal vertex $x_i$ is red.

In order to prove the result for $k>2$ we use a slightly different construction, and introduce
the following notation. An induced path $\{v_1,v_2,...,v_k\}$ in a graph
$G$ is said to be {\em dangling} if $v_1$ is of degree 1 in $G$ and 
$v_2,\ldots,v_{k-1}$ are of degree 2 in $G$.

\begin{lemma}
If a connected graph $G$ has a dangling path $P$ of length at most $k$, then in any $k$-role-colouring of $G$ no two vertices of $P$ can have the same colour
\end{lemma}

\begin{proof}
Let $\{v_1,v_2,\ldots,v_k\}$ be a dangling path in $P$ such that $v_1$ has degree 1. We have observed that the closed neighbourhood of a vertex cannot be monochromatic. Without loss of generality, $v_1$ has colour 1 and $v_2$ has colour 2. Suppose that $v_i$ has colour $i$ for $i<j$. Clearly $v_j$ cannot have colour $c<j-2$. Suppose $v_j$ has colour $j-2$ or $j-1$. Then no vertex of colour at least $j$ can have a neighbour of colour at most $j-1$. This contradicts the connectedness of $G$. Therefore without loss of generality, $v_j$ has colour $j$.
\end{proof}

We now give a description of $G'_\phi$ in the case that $k$ is at least 3. Let $\phi$ be a {\sc planar $3,3$-satisfiability} problem on $n$ variables $x_1,x_2,\ldots,x_n$ having $m$ clauses $C_1,C_2,\ldots,C_m$, and let $G_\phi$ be its formula graph. We construct a related planar graph $G'_\phi$ that has a $k$-role-colouring if and only if $\phi$ is satisfiable. 

We construct $G'_\phi$ from $G_\phi$ as follows. To each clause vertex $C_j$ we add a dangling path $v_{j,1},\ldots,v_{k,k}$, with edge $v_{k,k},C_j$ (lemma \ref{planarop} (c)). To each pair $v_{k,k},C_j$ we add two vertices $u_{j,1},u_{j,2}$ that both form a triangle with $v_{k,k},C_j$ (lemma \ref{planarop} (a)).  As before, one of $x_i$ and $\bar{x_i}$ appears exactly once. If $\bar{x_i}$ appears exactly once, in clause $C_j$, we replace the edge $x_i C_j$ by a path $x_i,z{j,1},\ldots ,z{j,2k-4},\bar{x_i},C_j$. Otherwise we have that $x_i$ appears exactly once, in which case we relabel the node $x_i$ to $\bar{x_i}$, and replace the edge $\bar{x_i} C_j$ by a path $\bar{x_i},z{j,1},\ldots ,z{j,2k-4},x_i,C_j$ (lemma \ref{planarop} (a)). We add a vertex $y_{i,k-1}$ and attach it to $x_i$ and $\bar{x_i}$ (lemma \ref{planarop} (a)). Finally, we add a dangling path $y_{i,1},\ldots,y{i,k-1}$ to the graph through the edge $y{i,k-1},y{i,k}$ (lemma \ref{planarop} (c)). See figure \ref{fig:kcol}.
 
As in the first part of the proof, suppose $G'_\phi$ has a $k$-role-colouring. Observe that the path formed by $v_{i,1},\ldots,v_{i,k}$ is dangling, and must therefore be coloured with $k$ distinct colours. Without loss of generality $v_{1,1}$ has colour 1, and indeed $v_{1,j}$ has colour $j$. Therefore the neighbours of any vertex of colour $j$ must have colour $j-1$ or $j+1$ for $1<j<k$. Consider the vertices $u_{1,1},u_{1,2}$. If either of them has colour $k-1$, then $C_i$ must have colour $k-2$. But $C_1$ is adjacent to $v_{1,k}$ which has colour $k$ which leads to a contradiction. So $u_{1,1}$ and $u_{1,2}$ must have colour $k$ and therefore $C_1$ has colour $k-1$. 

Since the role-graph of this colouring of $G'_\phi$ is a simple path on the colours $1,\ldots,k$ with $k$ having a self loop, all vertices of degree 1 must have colour 1. So for each clause $C_i$, we have that $v_{i,1}$ has colour 1, $v_{i,2}$ has colour 2, and so on. This implies that every subgraph induced on vertices $C_i,v_{i,1},\ldots,v_{i,k},u_{i,1},u_{i,2}$ has the same colouring as described above for $i=1$.

For each variable $x_i$, we have that $y_{i,1}$ has colour 1, $y_{i,2}$ has colour 2,..., $y_{i,k-1}$ has colour $k-1$. This implies that the vertices $x_i$ and $\bar{x}_i$ must have colours $k-2$ and $k$, or vice versa. If $x_i$ receives colour $k-2$ and $\bar{x}_i$ receives colour $k$, then the vertices $z_{i,1}$ through $z_{i,k-3}$ through $z_{i,2k-4}$ must receive colours $k-3$ through $1$ through $k$. Alternatively, if $x_i$ receives colour $k$ and $\bar{x}_i$ receives colour $k-2$, then the vertices $z_{i,1}$ through $z_{i,k-3}$ through $z_{i,2k-4}$ must receive colours $k$ through $1$ through $k-3$. 

Now we construct a satisfying assignment for $\phi$ from this colouring. If the vertex representing $x_i$ has colour $k-2$ we assign the variable $x_i$ the value TRUE. If the vertex representing $\bar{x_i}$ has colour $k-2$ we assign the variable $\bar{x}_i$ TRUE. 

Since each vertex representing a clause must have a neighbour of colour $k-2$, each clause now has a variable or its negation that has been assigned TRUE, and therefore $\phi$ is satisfied.

\end{proof}


\section{Trees}
\label{sec:trees}
Let $P_m$ denote a path of length $m$, \emph{i.e.} a graph on vertex set $1,\ldots,m+1$ with edges $(i,i+1)$ for $1 \leq i \leq m$. Let $T$ denote a tree, \emph{i.e.} a connected graph on $n$ vertices and $n-1$ edges with no cycles.

For a valid role colouring of the vertices of a graph $G$ using $k$ colours, the \emph{role graph} $G^R$ is defined in an obvious way. $G^R$ has $k$ vertices, each one corresponding to a colour used on $V(G)$. Vertices $i$ and $j$ in $V(G^R)$ have an edge in $G^R$ if and only if vertices of colour $i$ are always connected to a vertex of colour $j$ in $G$. The graph $G^R$ may have self-loops. It is easy to see that $\Delta (G^R) \leq \Delta(G)$, $\delta (G^R) \leq \delta(G)$, where $\delta(G)$ and $\Delta(G)$ denote the minimum and maximum degree in $G$, respectively, over all vertices. If $G$ is connected, then $G^R$ must be connected. (The converse is not true.) This holds because a path in $G$ must correspond to a walk in $G^R$.

\begin{lemma} A path $P_{n-1}$ can be role coloured using $k$ colours if and only if $n=k+s(k-1)$ or $n=2k+s(2k-1)$, where $s$ is a positive integer, and {\sc Path} $k$-{\sc role colourability} is in $P$.
\end{lemma}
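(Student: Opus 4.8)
The plan is to analyse the structure of any role colouring of a path by passing to its role graph $P_{n-1}^R$. The key observation is that $P_{n-1}$ has exactly two vertices of degree $1$ and all others of degree $2$, so $\Delta(P_{n-1}^R) \le 2$ and $\delta(P_{n-1}^R) \le 1$; since $P_{n-1}$ is connected, $P_{n-1}^R$ is a connected graph of maximum degree at most $2$, i.e. a path or a cycle (possibly with self-loops at the ends of a path). A short case analysis rules out cycles and forces $P_{n-1}^R$ to be either a simple path $1,2,\dots,k$ with a self-loop permitted only at one or both endpoints, since an internal self-loop would create a degree-$3$ vertex. I would first enumerate the admissible role graphs on $k$ colours: the path with no self-loop, the path with a self-loop at one end, and the path with self-loops at both ends.

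Next I would translate each admissible role graph into a constraint on $n$. If the role graph is the bare path $1,\dots,k$ with no self-loop, then a proper walk in $P_{n-1}^R$ starting and ending at degree-$1$ colours must start and end at colour $1$ or $k$; tracing the colour sequence along $P_{n-1}$ shows the colour pattern is a concatenation of "up-down" sweeps $1,2,\dots,k,k-1,\dots,1,2,\dots$, which forces $n \equiv 1 \pmod{2(k-1)}$ type conditions — I would compute the exact residues and match them to $n = k + s(k-1)$ (the sweep that ends at the far endpoint) versus $n = 2k + s(2k-1)$ (patterns using the self-loop). Concretely: a self-loop at colour $k$ lets a run of consecutive colour-$k$ vertices appear, and a self-loop at colour $1$ lets colour $1$ repeat, and counting vertices in the resulting periodic pattern gives the two stated families. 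Conversely, for the "if" direction I would simply exhibit the explicit colouring: for $n = k + s(k-1)$ colour the path by sweeping $1,\dots,k$ then bouncing back and forth; for $n = 2k + s(2k-1)$ use the pattern that doubles up at an endpoint. One must also check the small/degenerate cases ($k=1$, $k=2$) separately and confirm $s \ge 1$ (as opposed to $s \ge 0$) is the right range given $P_{n-1}$ has $n$ vertices.

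Finally, the membership in $P$ is immediate once the characterisation is proven: given $n$ and $k$, checking whether $n = k + s(k-1)$ or $n = 2k + s(2k-1)$ for some positive integer $s$ is two modular/divisibility tests computable in time polynomial in the input size, and when the test succeeds the explicit sweeping colouring is output in linear time.

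I expect the main obstacle to be the bookkeeping in the forward direction: correctly showing that a connected role graph of maximum degree $2$ with the degree-$1$ constraint forces precisely the path-with-endpoint-self-loops family and no others, and then carefully counting vertices in the resulting periodic colour sequences to land exactly on $n = k + s(k-1)$ and $n = 2k + s(2k-1)$ rather than some off-by-one variant. In particular, one has to be careful that a self-loop at an interior colour is genuinely impossible (it would force that colour class to contain a vertex with two same-coloured neighbours plus the neighbours dictated by the path structure, exceeding degree $2$) and that the two families are disjoint or overlap only as the formula intends.
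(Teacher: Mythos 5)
Your plan follows essentially the same route as the paper: pass to the role graph, use $\Delta(P_{n-1}^R)\le 2$, connectivity and the degree-one endpoints to force a path whose only possible self-loop sits at a leaf, read the colouring of $P_{n-1}$ as an end-to-end walk in $P_{n-1}^R$ and count its possible lengths, with the converse given by the explicit sweeping colourings and membership in P by a trivial arithmetic check. The extra bookkeeping you flag resolves without changing the argument: the loops-at-both-ends case is infeasible (no colour would have role-degree one to host the path's endpoints), and on the $s\ge 1$ versus $s\ge 0$ boundary note that the paper's own walk counts in fact start at $n=k$ and $n=2k$, i.e.\ correspond to $s\ge 0$.
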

\begin{proof}
By the properties of $G^R$, we see that $P_{n-1}^R$ must be a path. It may have one self-loop on a leaf vertex. The path from vertex $1$ to vertex $n$ on $P_{n-1}$ corresponds to a walk on $P_{n-1}^R$ that must start and end at a vertex of degree one. If $P_{n-1}^R$ contains no self-loops then such walks can be of length $k,k+(k-1),k+2(k-1),\ldots$. If $P_{n-1}^R$ contains one self-loop then such walks can be of length $2k,2k+(2k-1),2k+2(2k-1),\ldots$. 

Checking whether either of these equalities holds is clearly in $P$, and then colouring the path is in $P$, because for a given $P_{n-1}^R$, there are at most two ways of colouring $P_{n-1}$.
\end{proof}

\begin{lemma} In a role colouring of a tree $T$, $T^R$ must be a tree with at most one self-loop. \end{lemma}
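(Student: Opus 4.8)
The plan is to show two things about $T^R$: that it is a tree (acyclic), and that it has at most one self-loop. Recall from the preceding discussion that $T^R$ must be connected whenever $T$ is connected, since any path in $T$ projects to a walk in $T^R$. So it suffices to bound the number of edges of $T^R$ (counting a self-loop as one edge) by the number of its vertices minus one, plus an allowance of one for a single self-loop; equivalently, I would argue directly that $T^R$ cannot contain a cycle of length $\geq 2$, and cannot contain two self-loops (nor a self-loop together with a cycle).

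The key idea is a lifting argument. Suppose $T^R$ contains a closed walk that is not reducible to a trivial walk — concretely, suppose $T^R$ has a cycle $c_1, c_2, \ldots, c_\ell, c_1$ with $\ell \geq 2$ (the $c_i$ distinct colours), or a self-loop at a colour $c$ together with another self-loop or cycle elsewhere. I would pick a vertex $v \in V(T)$ coloured $c_1$ (respectively $c$) and then walk in $T$: since colour-$c_1$ vertices all have a colour-$c_2$ neighbour, move to such a neighbour; from there, since colour-$c_2$ vertices all have a colour-$c_3$ neighbour, move on; and so on around the cycle. The subtlety is that when I return to a colour-$c_1$ vertex I may not have returned to $v$, so I do not immediately get a cycle in $T$. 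To fix this, I would instead produce an arbitrarily long walk in $T$ that never repeats an edge: because the role graph has a cycle (or two self-loops, or a self-loop and a cycle), I can keep extending the walk in $T$ forever, always having a legal next colour to step to, and — this is the point needing care — I can always choose the next edge to be different from the one just used, because at each vertex the relevant neighbour in the target colour is distinct from where I came from (if the previous and next colours differ) or, in the self-loop case, I use the fact that a colour-$c$ vertex adjacent to a colour-$c$ vertex still has the freedom to continue. Actually the cleanest route: in a tree, every walk of length $> |E(T)|$ must repeat an edge, hence must "backtrack", i.e. traverse some edge and then immediately traverse it back. I would show that the structure of $T^R$ forbids such a forced backtrack unless $T^R$ is itself a tree with at most one self-loop.

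More precisely, here is the argument I would carry out. First, $T^R$ has no multi-edges and its only possible loops are single loops, since the role graph is a simple graph with possible self-loops by definition. A backtrack in a walk in $T$ — traversing edge $uv$ then $vu$ — projects in $T^R$ to traversing an edge between colours $\chi(u)$ and $\chi(v)$ and then back. If $T^R$ has a cycle $C$ of length $\geq 2$, orient it and start a walk in $T$ that follows $C$ cyclically; I claim this walk can be chosen to never backtrack, because at the vertex where we must step from colour $c_i$ to colour $c_{i+1}$, the edge we would backtrack along goes to a colour-$c_{i-1}$ vertex, and $c_{i-1} \neq c_{i+1}$ since $C$ has length $\geq 2$ and $c_{i-1}, c_i, c_{i+1}$ are consecutive distinct vertices of a cycle (for $\ell = 2$, $c_{i-1} = c_{i+1}$, so one must handle $\ell = 2$ separately: a double edge $c_1 c_2$ is impossible in a simple graph, and a genuine 2-cycle would require $T^R$ to be a multigraph, so $\ell \geq 3$ may be assumed, or one notes a 2-cycle in a simple graph is just an edge). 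A non-backtracking walk in a tree never repeats a vertex, so it cannot be infinite — contradiction, since $T$ is finite but the cyclic walk following $C$ is infinite. The same reasoning rules out two distinct self-loops (walk between them along a connecting path, looping arbitrarily — since $T$ is finite, a non-backtracking arbitrarily-long walk is impossible) and rules out a self-loop plus a cycle.

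The step I expect to be the main obstacle is making the "non-backtracking" choice rigorous in the self-loop cases and at degree considerations: when the walk sits at a colour-$c$ vertex $w$ that must step to another colour-$c$ vertex (traversing a self-loop), I need $w$ to have a colour-$c$ neighbour other than the one I came from, which is not automatic. The resolution is that I do not actually need the walk to follow a self-loop repeatedly at one vertex; I only need a single long non-backtracking walk witnessing the "extra" cycle, and I can route it through the bulk of $T^R$. So the real content is a clean lemma: if $T^R$ is connected and has at least $|V(T^R)|$ edges (loops counted once), then $T$ contains an infinite non-backtracking walk, contradicting finiteness of $T$; hence $|E(T^R)| \leq |V(T^R)|$ with equality only if the one excess edge is a loop, i.e. $T^R$ is a tree plus at most one self-loop. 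I would write this up carefully, treating the loop case by noting a loop at $c$ forces every colour-$c$ vertex to have a colour-$c$ neighbour, which already lets one extend a walk arbitrarily within the colour class — but within a tree an arbitrarily long walk in a single "blob" still must backtrack, and backtracking along the loop edge in $T^R$ is harmless for the contradiction as long as some genuine cycle or second loop forces non-backtracking elsewhere.
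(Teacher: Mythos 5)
Your strategy---lift a closed walk of $T^R$ to an arbitrarily long non-backtracking walk in $T$ and use the fact that a non-backtracking walk in a finite tree is a simple path---is sound, and it is genuinely different from the paper's argument, which instead takes an edge of $T$ joining two vertices of a looped colour, cuts it, argues each side must again realise the whole role structure, and recurses. Your route has the advantage of explicitly covering the acyclicity of $T^R$ (the paper's proof only discusses self-loops), and your cycle case is complete as written: with $\ell\geq 3$ (a ``2-cycle'' being a non-issue in a simple role graph), the next colour always differs from the previous colour, so the lifted walk never backtracks and hence never repeats a vertex, contradicting finiteness of $T$.

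The gap is in the self-loop case and in the ``clean lemma'' you reduce everything to. That lemma is false as stated: a connected role graph that is a tree plus a single self-loop has exactly $|V(T^R)|$ edges (loops counted once) and is realisable by a tree (colour $P_4$ as $1,2,2,1$: the role graph is an edge with a loop at colour $2$), so ``at least $|V(T^R)|$ edges'' cannot by itself force an infinite non-backtracking walk---indeed your own conclusion ``equality only if the excess edge is a loop'' contradicts the lemma you invoke. More importantly, the obstacle you flag (needing a colour-$c$ vertex to have a colour-$c$ neighbour other than its predecessor) is never actually discharged; ``routing through the bulk of $T^R$'' and ``backtracking along the loop edge is harmless'' do not constitute an argument. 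The fix is short and should be stated: with loops at colours $a\neq b$, take a path $a=d_0,d_1,\ldots,d_p=b$ in $T^R$ (it exists since $T^R$ is connected) and lift the periodic colour sequence $\ldots,a,a,d_1,\ldots,d_{p-1},b,b,d_{p-1},\ldots,d_1,a,a,\ldots$. In this pattern a loop is never traversed twice in a row, so whenever you must step to a same-coloured neighbour you arrived from a vertex of a different colour, and the neighbour guaranteed by the loop is automatically distinct from your predecessor; at every other step the next colour also differs from the previous one. Hence the lifted walk is non-backtracking and infinite, the desired contradiction, and together with your cycle case this yields that $T^R$ is a tree with at most one self-loop.
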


\begin{proof} Let $T^R$ be a role graph with a self-loop on vertex $w \in V(T^R)$. Consider an edge $(v_1,v_2)$ in $T$ connecting two vertices of colour $w$. If we cut the this edge, we have two components that must each contain a subgraph isomorphic to $T^R$ without the selfloop on $w$. However, if $T^R$ contains another self-loop, this process must repeat infinitely many times. 
\end{proof}

\begin{lemma} For trees, $k$-{\sc role colourability} is in $P$, if $k$ is constant or if $n-k$ is constant..
\end{lemma}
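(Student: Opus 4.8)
The plan is to handle the two regimes separately, reducing each to the following common subproblem: given a fixed graph $H$ on $k$ vertices (which I will always take to be a tree, possibly with one self-loop added), decide whether $T$ admits a locally surjective homomorphism onto $H$. By the lemma above, the role graph of any role colouring of a tree is a tree with at most one self-loop, and for connected $H$ with at least two vertices a locally surjective homomorphism from a connected graph is automatically globally surjective (chase the neighbourhood-surjectivity condition around the connected graph $H$), so a $k$-role-colouring of $T$ is \emph{exactly} a locally surjective homomorphism of $T$ onto one of these targets on $k$ vertices. (The cases $k>n$ and $k=n$ are trivial, so assume $1<k<n$.)

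When $k$ is constant there are only finitely many such targets (at most $k^{k-2}(k+1)$ of them, a function of $k$ alone). For each fixed $H$ I would run a bottom-up dynamic program on $T$, rooted arbitrarily: the table at a vertex $v$ records pairs $(c,S)$ with $c=h(v)$ and $S\subseteq V(H)$ the set of colours $h$ assigns to the children of $v$, marking $(c,S)$ feasible when the subtree below $v$ can be coloured so that local surjectivity holds at every vertex of that subtree except possibly at $v$ itself (where only children have been accounted for). The transition at $v$ picks, for each child $u$, a colour $c'\in N_H(c)$ to export upward, provided $u$ has a feasible entry $(c',S')$ with $N_H(c')\setminus S'\subseteq\{c\}$ (so that $v$, the parent that $u$ has not yet seen, completes $u$'s neighbourhood), and then combines these exported colours into $S$; at the root one additionally requires $S=N_H(c)$. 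Since $|V(H)|=k$ is constant each table has $O(k2^k)$ entries and each transition is a routine subset-DP over the children, so this runs in time polynomial (indeed linear, for fixed $k$) in $n$, and $T$ is $k$-role-colourable iff some $H$ produces a feasible root entry.

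When $d:=n-k$ is constant the role graph can be enormous, so instead I would exploit that a $k$-role-colouring partitions $V(T)$ into $k$ nonempty classes with sizes $n_1,\ldots,n_k$ summing to $n$, whence $\sum_i(n_i-1)=d$; in particular at most $2d$ vertices lie in classes of size at least $2$. The algorithm guesses this set $B$ of non-singleton vertices ($O(n^{2d})$ choices, polynomial for fixed $d$) and guesses how $B$ splits into classes of size $\geq2$ ($O(1)$ choices); every remaining vertex is then forced into its own class, which determines the colouring uniquely up to relabelling, so it only remains to verify in linear time that this colouring is a genuine role colouring. The verification is purely local because a singleton class imposes no constraint at all — the defining condition relates only vertices of equal colour — so one just checks, for each guessed class $B_i$ of size $\geq 2$, that all its vertices have the same set of colours on their neighbours, equivalently (singletons carrying distinct colours) that all vertices of $B_i$ have identical neighbourhoods among the singletons and identical sets of incident $B_{i'}$'s. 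Output YES iff some guess passes.

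The step I expect to be the main obstacle is getting the bookkeeping of the first dynamic program exactly right: handling the self-loop case uniformly, and ensuring local surjectivity is enforced at \emph{every} vertex, where the delicate point is the interaction between a vertex's recorded state and its not-yet-processed parent. The clean observation that local surjectivity on a connected domain forces global surjectivity (so that we need not also track which colours have already been used) is what keeps the state space small. The second regime is conceptually lighter once the ``singletons are unconstrained'' observation and the $|B|\le 2d$ counting bound are in place.
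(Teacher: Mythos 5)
Your proposal is correct in both regimes, but only the first follows the paper's route; the second is genuinely different and simpler. For constant $k$ the high-level strategy coincides with the paper's: by the preceding lemma the role graph of a tree must be one of the $k^{k-2}(k+1)$ labelled trees on $k$ vertices with at most one self-loop, so one tests each candidate target separately. The paper invokes the known algorithm of Fiala and Paulusma for locally surjective homomorphisms to a fixed loop-free tree and handles a looped target by a doubling trick (two copies of $T^R$ joined at the looped vertex), whereas you give a self-contained bottom-up dynamic program whose $(c,S)$ states handle the loop uniformly, using the observation that local surjectivity from a connected domain onto a connected target forces vertex-surjectivity; this is a sound and arguably cleaner realisation of the same plan. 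For constant $d=n-k$ the paper proves structural claims about where repeated colours can occur in a tree (vertices separating large components must be uniquely coloured, repetitions are confined to ``gadgets'' of size at most $2d+1$ around ``hubs'') and then performs a guided brute force over gadget colourings; you instead note that $\sum_i(n_i-1)=d$ forces at most $2d$ vertices into non-singleton classes, guess that set and its partition in $O(n^{2d})$ time, and verify the definition directly, using the fact that singleton classes impose no constraints. Your argument is shorter, easier to check, and never uses the tree structure, so it in fact shows polynomiality of $k$-role-colourability for $n-k$ constant on arbitrary graphs; what the paper's gadget/hub analysis buys is structural information about where repeated colours can sit in a tree, at the cost of a considerably more delicate argument. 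One minor bookkeeping point in your second algorithm: accept only guesses for which the resulting number of classes is exactly $k$, i.e.\ $\sum_i(|B_i|-1)=n-k$; this is immediate to add and does not affect correctness.
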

\begin{proof}
It is shown in \cite{fiala2008comparing} that, for a tree $T$, and a known role graph $T^R$ without self-loops, checking role colourability can be done in polynomial time\footnote{The algorithm given is only geared towards deciding role colourability, but it is easily transformed into an algorithm that finds an explicit colouring in polynomial time.}. By Cayley's formula, there are $k^{k-2}$ labelled trees on $k$ vertices, and $(k+1)k^{k-2}$ trees with one or no self-loops. Therefore, one can check colourability for all possible role graphs, of which there are a constant number. 
Let $T^R$ be a role graph with a self-loop on vertex $v \in V(T^R)$. Consider $T_*^R$, which is constructed as follows. Take two copies of $T^R$ without the self-loop, one with vertex set $1,\ldots,w,\ldots,k$ and a copy with vertex set $1',\ldots,w',\ldots,k'$. Let $T_*^R$ be the union of these two tree with an edge added between $w$ and $w'$. A valid $2k$-role colouring of $T$ according to $T_*^R$ now corresponds to a valid $k$-role colouring according to $T^R$ by merging the colour classes $1$ and $1'$, $2$ and $2'$, \emph{etc}.
 
\begin{claim}\label{claimrep} Let $k'=n-k$. Suppose $T$ has a valid $k$-role colouring. Let $v_1,\ldots,v_t$ be a path where $v_1$ and $v_t$ are vertices of the same colour. Then this path contains vertices of no more than $\lceil t/2 \rceil$ colours. Additionally, if $v_1$ and $v_t$ are removed from $T$, we are left with three components, $T_a,T_b,T_c$, where $T_b$ contains the path $v_1,\ldots,v_t$. Then $T_a$ and $T_c$ must contain the same colour set. \end{claim}
\begin{proof}
Vertices $v_2$ and $v_{t-1}$ must have the same colour. If not, then the path would correspond to a cycle in $T^R$, which is a contradiction. This argument can be repeated for the path $v_2,\ldots,v_{t-1}$. The neighbours of $v_1$ and $v_t$ that are not on the path $v_1,\ldots,v_t$ must have the same colour sets. 

Without loss of generality, suppose there are vertices in $T_a$ of a colour that does not appear in $T_c$. Let $v_a$ be such a vertex that is the closest to $v_1$. The second vertex on the path between $v_a$ and $v_1$ is of a colour that appears in $T_c$, but is adjacent to a vertex of a colour that does not. This is a contradiction.\end{proof} 

\begin{claim}\label{hubunique} If $T$ is $k$-role coloured with $k'=n-k$ a constant, then a vertex which cuts $T$ into more than one component of size $>2k'+1$ must have a unique colour.
\end{claim}
\begin{proof} Follows directly from claim \ref{claimrep}.\end{proof}

We define a \emph{gadget} as follows. A gadget is a maximal subtree of $T$ of size at most $2k'+1$ such that the complement of the gadget in $T$ is connected. The vertices that are adjacent to a gadget but are not in a gadget themselves are called \emph{hubs}. Gadgets are clearly non-overlapping, and hub may have many gadgets connected to it. These definitions are illustrated in figure \ref{fig:hubs}.

\begin{claim} Repeating colours can only appear within gadgets and two vertices of the same colour are either in the same gadget or in gadgets adjacent to the same hub. \end{claim}

\begin{proof}This follows from claims \ref{claimrep} and \ref{hubunique}.\end{proof}
Colour each hub with a unique colour. For every hub, go through each gadget and record all possible colourings of the subgraphs induced on the gadget and the hub, and corresponding gadget role graph (which includes a node for the unique colour of the hub). For any gadget this can be done in constant time. Additionally, keep track of all combinations of multiple gadgets that can be role coloured using the same gadget role graph. This can all be done by brute force in polynomial time, as there are $O(n)$ gadgets, $O(1)$ different gadget role graphs. Now, record a list of all the possible numbers of duplicate-coloured vertices within the hub and its gadgets. Suppose the gadgets are coloured one by one, such that those with duplicate colours are coloured first. There are $O(n-k)=O(1)$ such gadgets with a constant number of possible role colourings each. After these gadgets have been role coloured the other gadgets must be rainbow coloured. Therefore, recording all possible role colourings of the hub and its gadgets that yield no more than $k'$ duplicate-coloured vertices takes $O(n^{k'})$ time. 

Once all possible colourings for the individual hubs and their gadgets have been recorded, consider all combinations of different hub and gadget colourings. Each hub and its gadgets under one of these colourings contains $\leq k'$ duplicate-coloured vertices, and we have found a successful colouring if there is a set of hub and colouring combinations that add up to $k'$ duplicate-coloured vertices. We need $O(1)$ hubs in such a combination so, again, a brute force search of all combinations is sufficient. If such a combination does not exist, then a $k$-role colouring does not exist. Otherwise, colour the relevant hubs and their gadgets successfully, and rainbow-colour the remaining uncoloured vertices in $T$. This results in a valid $k$-role colouring of $T$. 

\begin{figure}
\begin{center}
\begin{tikzpicture}
\draw[black!20,line width=20pt] (0:0) -- (-120:1);
\draw[fill=black!20,black!20] (-120:1) circle (10pt);
\draw[fill=black!20,black!20] (2,1) circle (10pt);
\draw[fill=black!20,black!20] (5.4,1.4) circle (10pt);
\draw[fill=black!20,black!20] (4.7,-.7) circle (10pt);
\draw[black!20,line width=20pt] (0:0) -- (120:1);
\draw[black!20,line width=20pt] (120:1) -- (90:1.73);
\draw[black!20,line width=20pt] (120:1) -- (150:1.73);
\draw[black!20,line width=20pt] (2,0) -- (2,1);
\draw[black!20,line width=20pt] (4,0) -- (5.4,1.4);
\draw[black!20,line width=20pt] (4,0) -- (4.7,-.7);
\draw[fill=black!20,black!20] (90:1.73) circle (10pt);
\draw[fill=black!20,black!20] (150:1.73) circle (10pt);
\draw[fill=black!20,black!20] (-120:1) circle (10pt);
\draw[fill=white!30,white!30] (0:0) circle (10pt);
\draw[fill=white!30,white!30] (0:2) circle (10pt);
\draw[fill=white!30,white!30] (4,0) circle (10pt);
\draw[black!50,line width=.8pt] (0:0) -- (0:1);
\draw[black!50,line width=.8pt] (0:0) -- (-120:1);
\draw[black!50,line width=.8pt] (0:0) -- (120:1);
\draw[black!50,line width=.8pt] (0:1) -- (0:2);
\draw[black!50,line width=.8pt] (120:1) -- (90:1.73);
\draw[black!50,line width=.8pt] (120:1) -- (150:1.73);
\draw[black!50,line width=.8pt] (2,0) -- (2,1);
\draw[black!50,line width=.8pt] (2,0) -- (4,0);
\draw[black!50,line width=.8pt] (4,0) -- (5.4,1.4);
\draw[black!50,line width=.8pt] (4,0) -- (4.7,-.7);
\filldraw[fill=white, draw=black,line width=1pt] (0:0) circle (.1);
\draw[fill=black!100,black!100] (-120:1) circle (.1);
\draw[fill=black!100,black!100] (0:1) circle (.1);
\draw[fill=black!100,black!100] (120:1) circle (.1);
\filldraw[fill=white, draw=black,line width=1pt] (0:2) circle (.1);
\draw[fill=black!100,black!100] (2,1) circle (.1);
\draw[fill=black!100,black!100] (3,0) circle (.1);
\filldraw[fill=white, draw=black,line width=1pt] (4,0) circle (.1);
\draw[fill=black!100,black!100] (90:1.73) circle (.1);
\draw[fill=black!100,black!100] (150:1.73) circle (.1);
\draw[fill=black!100,black!100] (5.4,1.4) circle (.1);
\draw[fill=black!100,black!100] (4.7,-.7) circle (.1);
\draw[fill=black!100,black!100] (4.7,.7) circle (.1);

\end{tikzpicture}
\caption{Hubs are the vertices that separate gadgets from the rest of the tree. Here $k'=1$, hubs are white-filled and gadgets are shaded in grey.}\label{fig:hubs}
\end{center}
\end{figure}
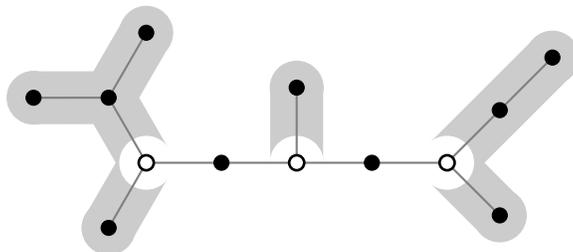

\end{proof}

\FloatBarrier

\section{Cographs}
\label{sec:cographs}

\emph{Cographs} are exactly the $P_4$ free graphs~\cite{seinsche1974property}. The join of two graphs $G_1$ and $G_2$ is the graph $G_3=G_1+G_2$ such that $V(G_3)=V(G_1) \cup V(G_2)$ and $E(G_3)=E(G_1) \cup E(G_2) \cup \{ (i,j) | i \in V(G_1), j \in V(G_2) \} $. The disjoint union of two graphs $G_1$ and $G_2$ is the graph $G_3=G_1 \cup G_2$ such that $V(G_3)=V(G_1) \cup V(G_2)$ and $E(G_3)=E(G_1) \cup E(G_2)$. Cographs can be constructed recursively from $K_1$ by disjoint union and join. They are the smallest class of graphs closed under disjoint union and join. Every cograph $G$ has an associated (binary, not necessarily unique) \emph{cotree}, whose leaves correspond to the vertices of $G$, and the non-leaves are labelled ``0" and ``1" to denote disjoint unions and joins, respectively. The cotree describes how $G$ is formed from instances of $K_1$ by successive joins and disjoint unions. Given a cograph, its cotree can be found in linear time~\cite{cournier1994new,mcconnell1994linear}.

\begin{theorem}
All cographs with $\geq 2$ vertices are 2-role-colourable and 2-{\sc rolecol} for cographs is in P.
\end{theorem}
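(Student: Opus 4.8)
The plan is to produce, for every cograph $G$ on $n\ge 2$ vertices, an explicit $2$-role-colouring by a case analysis driven by the cotree; membership of $2$-\textsc{rolecol} in $P$ then follows at once, since the answer is \textsc{yes} for every cograph with at least two vertices and a colouring is constructed in polynomial (in fact linear) time.

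First I would dispose of the degenerate cases. If $G$ has no edge, any partition of $V(G)$ into two non-empty classes is a valid $2$-role-colouring, because every neighbourhood colour set is empty. If $G$ is disconnected and has an edge, let $D$ be its set of isolated vertices: colour $D$ with colour $1$ and all other vertices with colour $2$ if $D\ne\emptyset$; otherwise every component is non-trivial and there are at least two of them, so colour one component with colour $1$ and the remaining components with colour $2$. In each case all colour-$1$ vertices have one fixed neighbourhood colour set and all colour-$2$ vertices have another, so the colouring is valid (its role graph is an isolated vertex together with a loop, or two disjoint loops, respectively).

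The substantive case is $G$ connected with $n\ge 2$, where the cograph hypothesis is essential: a connected cograph on at least two vertices is a join $G=G_1+G_2$ (the two subtrees of the root of its cotree, which must be a join node). If $G$ has no universal vertex, then neither $G_1$ nor $G_2$ is a single vertex, so $|V(G_1)|,|V(G_2)|\ge 2$; I would pick $a\in V(G_1)$, $b\in V(G_2)$, colour $\{a,b\}$ with colour $1$ and everything else with colour $2$, and check that every vertex sees both colours (using that $a$ is joined to all of $V(G_2)$, $b$ to all of $V(G_1)$, and both sides have size at least $2$), so the role graph is $K_2$ with two loops. If $G$ has a universal vertex $v$, put $H=G-v$. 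If $H$ is edgeless then $G$ is a star and colouring $v$ with colour $1$ and $V(H)$ with colour $2$ works (role graph a single edge). If $H$ has an edge but no isolated vertex, the same colouring works, since every vertex of $H$ sees colour $1$ through $v$ and colour $2$ through an $H$-neighbour (role graph an edge with one loop).

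The one delicate case — the step I expect to be the real obstacle — is when $v$ is universal and $H=G-v$ has both an edge and an isolated vertex, so none of the monochromatic-side colourings above is valid. Here I would write $H=I\sqcup H'$ with $I\ne\emptyset$ the set of isolated vertices of $H$ and $\delta(H')\ge 1$, take a maximal independent set $M$ of $H'$, set $R=V(H')\setminus M\ne\emptyset$, and colour $\{v\}\cup R$ with colour $1$ and $I\cup M$ with colour $2$. Then every colour-$2$ vertex sees only colour $1$ (vertices of $I$ see only $v$; vertices of $M$ see only $v$ and vertices of $R$, since $M$ is independent), and every colour-$1$ vertex sees both colours ($v$ sees $I$ and $R$; each vertex of $R$ sees $v$ and, by maximality of $M$, a neighbour in $M$), so the colouring is valid with role graph an edge and one loop. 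Finally I would note that computing the cotree, the components, a universal vertex, and a maximal independent set can all be done in linear time, so the colouring is produced efficiently; and since every cograph on $\ge 2$ vertices is $2$-role-colourable, the decision problem restricted to cographs lies in $P$.
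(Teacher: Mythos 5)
Your proof is correct, and while it follows the same broad outline as the paper (decompose via the cotree, treat disconnected graphs by monochromatic pieces, and for a connected cograph exploit that the root of the cotree is a join), it differs in two substantive ways. First, you avoid induction entirely: where the paper 2-role-colours $G_1$ and $G_2$ recursively and argues the join makes every neighbourhood bichromatic, you give a direct colouring (two special vertices $a,b$ versus the rest), which yields the same role graph with less machinery. Second, and more interestingly, in the delicate case of a universal vertex $v$ whose complement $H=G-v$ has both an edge and an isolated vertex (the paper's case (iii)), you colour $V(H')\setminus M$ with $v$ and $M\cup I$ oppositely, where $M$ is a \emph{maximal independent set} of the non-isolated part $H'$. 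Maximality guarantees every vertex outside $M$ has a neighbour in $M$, so all colour-1 vertices genuinely see both colours. The paper instead colours a single vertex blue in each nontrivial component of $H$ and the rest red; that only works if the chosen blue vertex dominates its component, which a connected cograph component need not admit (e.g.\ a $C_4$ component leaves the vertex opposite the blue one with an all-red neighbourhood). So your maximal-independent-set device is not just a stylistic variant: it handles exactly the situation where the paper's construction, read literally, is fragile, at the cost of introducing one extra (still linear-time) computation. Your treatment of the edgeless, disconnected, star-like, and no-universal-vertex cases matches the paper's in substance, and your complexity remarks suffice for membership in $P$.
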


\begin{proof}
Suppose $G$ is a cograph. If $G$ is not connected, then we can mono-colour each component either red or blue, such that both red and blue are used and such that, if $G$ contains both isolated vertices and components of size $\geq 2$, these two types of components receive different colours.

Therefore, suppose $G$ is connected. A connected cograph $G$ with $|V(G)|=2$ is isomorphic to $K_2$, and can both be 2-role-coloured in the obvious way. Suppose that all connected cographs $G'$ with $|V(G')|<k$ can be 2-role-coloured. Suppose $|V(G)|=k>2$ and the last step in the construction of $G$ was a join of graphs $G_1$ and $G_2$ (which it must be if $G$ is connected). We consider three separate cases.
\begin{description}
\item[(i)] If $|V(G_1)|,|V(G_2)|>1$, then we can 2-role-colour the vertices of $G_1$ (red and blue) and $G_2$ (red and blue). This extends to a valid 2-role-colouring of $G$, because all vertices have red and blue neighbours. 

\item[(ii)] If $|V(G_1)|=1$, $|V(G_2)|>1$ (without loss of generality) and $G_2$ is 1-role-colourable, then colour the vertex of $G_1$ red and the vertices of $G_2$ blue. This extends to a valid 2-role-colouring of $G$, because the red vertex has only blue neighbours and the blue vertices have only a red neighbour or red and blue neighbours, depending on whether $G_2$ is empty or not.

\item[(iii)] If $|V(G_1)|=1$, $|V(G_2)|>1$ (without loss of generality) and $G_2$ is not 1-role-colourable, then colour the vertex of $G_1$ red. The graph $G_2$ is not 1-role-colourable, which means that is is disconnected with isolated vertices and components with $\geq 2$ vertices. Colour the isolated vertices of $G_2$ blue. For each component of $G_2$ with $\geq 2$ vertices, colour one vertex blue and the others red. This extends to a valid 2-role colouring of $G$, because all blue vertices have only red neighbours and all red vertices have both red and blue neighbours.
\end{description}

Therefore, we can always find a valid 2-role-colouring for $G$. It is easy to see that this method can be executed in polynomial time. We can find a cotree in polynomial time, which gives us a $G_1$ and $G_2$. Then we find the connected components of $G_1$ and $G_2$, which can also be done in polynomial time (by a series of at most $n$ breadth first searches). 
\end{proof}

\begin{theorem}
All cographs with $\geq k$ vertices are $k$-role-colourable and $k$-{\sc rolecol} for cographs is in P, where $k>2$.
\end{theorem}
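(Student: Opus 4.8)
The plan is a structural induction on $|V(G)|$ along the cotree: a connected cograph on at least two vertices is a join $G = G_1 + G_2$, and a disconnected cograph is the disjoint union of its components, each of which is again a cograph. I will repeatedly use two elementary facts about joins. \textbf{Join--split:} if $G = G_1 + G_2$ and $G_1$ has an $a$-role-colouring and $G_2$ a $b$-role-colouring, then colouring $G_1$ and $G_2$ with \emph{disjoint} palettes of sizes $a$ and $b$ is an $(a+b)$-role-colouring of $G$; indeed every vertex of $G_1$ is adjacent to all of $G_2$, so its neighbourhood colour set is the colour set of its $G_1$-neighbourhood together with all $b$ colours of $G_2$, which depends only on its colour, and symmetrically for $G_2$. \textbf{All colours:} if $G = G_1 + G_2$ with $|V(G_1)|,|V(G_2)| \ge k$, colouring each side with all $k$ colours makes every vertex adjacent to all $k$ colours, so every neighbourhood colour set equals $\{1,\dots,k\}$ and any such assignment is a valid $k$-role-colouring. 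The base case of the induction is $|V(G)|=k$, where the rainbow colouring is trivially a role-colouring; I also use freely that $K_1$, any two-vertex graph, and any connected graph on at least two vertices are $1$-role-colourable (monochromatically).

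\emph{Connected case.} Let $G = G_1 + G_2$ with $p = |V(G_1)| \le |V(G_2)|$ and $n = |V(G)| > k$. If $p \ge k$ (so both sides have at least $k$ vertices), apply \textbf{All colours}. Otherwise $p \le k-1$ and I divide the $k$ colours and apply \textbf{Join--split}. If $p \le k-2$: give $G_1$ a $p$-role-colouring (by induction, or monochromatic if $p=1$) and $G_2$ a $(k-p)$-role-colouring, which exists by induction since $k-p \ge 2$ and $|V(G_2)| = n-p \ge k-p$. If $p = k-1$ and $k \ge 4$: give $G_1$ a $(k-2)$-role-colouring (induction) and $G_2$ a $2$-role-colouring (the previous theorem, $k=2$, as $|V(G_2)| \ge 2$). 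If $p = k-1$ and $k=3$: then $G_1$ has two vertices, hence is $1$-role-colourable, and $3 = 1 + 2$.

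\emph{Disconnected case.} Let $I$ be the isolated vertices of $G$ and $H = G - I$. Since isolated vertices have empty neighbourhood, a colour used on $I$ cannot appear elsewhere, which shapes the construction. If $H = \emptyset$, then $G$ is edgeless and any partition of $V(G)$ into $k$ non-empty classes works. If $I = \emptyset$, let $H_1,\dots,H_t$ ($t \ge 2$) be the components: if $t \ge k$, partition the components into $k$ non-empty groups and monochrome each group with its own colour; if $t < k$, choose $1 \le m_i \le |V(H_i)|$ with $\sum_i m_i = k$ (possible since $t \le k \le n$) and give $H_i$ an $m_i$-role-colouring from a private palette, by induction or monochromatically. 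If $I \ne \emptyset$ and $H \ne \emptyset$, pick $s'$ with $1 \le s' \le \min(|I|, k-1)$ and $k - s' \le |V(H)|$ (such $s'$ exists since $|I| + |V(H)| = n \ge k$), colour $I$ with $s'$ colours by an arbitrary partition, and colour $H$ with the remaining $k-s'$ colours by induction (monochromatically if $k-s'=1$); disjointness of the two palettes makes the result valid.

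The step I expect to be the main obstacle is the boundary bookkeeping, whose sharpest manifestation is $k=3$: one cannot split three colours into two palettes of size $\ge 2$, so in a join one side must receive a single colour, which is legitimate only if that side is $1$-role-colourable, and a general cograph on $\ge 3$ vertices need not be (for instance $K_2 \cup K_1$). The case analysis above is arranged precisely to sidestep this: when the smaller side of a join has two vertices it \emph{is} $1$-role-colourable, and when both sides are large the \textbf{All colours} colouring applies, so the bad configuration never occurs; a parallel subtlety for disjoint unions is resolved by peeling off the isolated vertices first. Finally, the argument is constructive: a cotree is computable in linear time, and the induction becomes a recursion over the cotree in which each node performs only a simple colour assignment, so a $k$-role-colouring is produced in polynomial time whenever $n \ge k$; since no $n$-vertex graph is $k$-role-colourable for $k > n$, the decision problem on cographs amounts to testing $n \ge k$ and is thus in P.
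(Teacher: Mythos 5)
Your proof is correct, and at its core it is the same argument as the paper's: induct along the cotree and colour the two parts of the top-level join or disjoint union with disjoint palettes whose sizes sum to $k$ (your ``Join--split'' fact is exactly the paper's mechanism, and it works verbatim for disjoint unions as well). The difference is in the bookkeeping. The paper handles join and disjoint union uniformly in one stroke, by picking $k_1+k_2=k$ with $k_i\le n_i$ and $k_i=1$ only if $n_i=1$, and invoking the inductive hypothesis on both parts; you instead split into connected/disconnected cases, peel off isolated vertices, and add the ``All colours'' device and the explicit $k=3$ analysis. This extra care is not wasted: the paper's stated constraint cannot actually be met when $k=3$ and both parts have at least two vertices (neither $(1,2)$ nor $(2,1)$ satisfies ``$k_i=1$ only if $n_i=1$''), so its proof as written silently skips exactly the boundary you flag; your observation that any two-vertex graph, and more generally any graph without isolated vertices, is $1$-role-colourable (or, alternatively, your all-colours trick) is precisely the patch needed. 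So your route is slightly longer but fully rigorous where the paper is terse, and both yield the same polynomial-time construction and the $n\ge k$ decision criterion.
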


\begin{proof}
We know that all cographs with $\geq 2$ vertices are 2-role-colourable, so suppose that all cographs are $k'$-role-colourable for all $2 \leq k' <k$. Suppose $G$ is a cograph with $|V(G)|=n\geq k$ and the last step in a construction of $G$ was either a join or a disjoint union of $G_1$ and $G_2$, with $|V(G_1)|=n_1$ and $|V(G_2)|=n_2$. Pick $k_1$ and $k_2$ such that $k_1+k_2=k$, $k_1\leq n_1$, $k_2\leq n_2$ and $k_i=1$ only if $n_i=1$, for $i=1,2$. Now, $G_i$ is $k_i$-role-colourable by our inductive assumption, for $i=1,2$. Note that $k_i$ is only equal to 1 if $n_i$ is 1, and $K_1$ is always 1-role-colourable. So, we can colour $G_1$ using $k_1$ colours and $G_2$ using $k_2$ different colours. This extends to a valid $k$-role-colouring of $G$ regardless of whether $G=G_1 \cup G_2$ or $G=G_1 + G_2$.
\end{proof}

\section{Acknowledgements}
Puck Rombach is supported by AFOSR MURI grant FA9550-10-1-0569 and ONR grant N000141210040. Part of this work was undertaken while Puck Rombach was attending the	semester	program 	``Network	Science	and	Graph	Algorithms"	at	the	Institute	for	
Computational	and	Experimental	Research	in	Mathematics	(ICERM) at Brown University.
\bibliographystyle{plain}
\bibliography{role}

\end{document}